\documentclass[11pt]{article}
\usepackage[T1]{fontenc}
\usepackage[utf8]{inputenc}
\usepackage{lmodern,amsmath,amssymb,amsthm,booktabs,array}
\usepackage[margin=27mm]{geometry}
\usepackage{tikz}
\usepackage{needspace}
\usetikzlibrary{arrows.meta}
\usepackage[colorlinks=true,allcolors=blue]{hyperref}
\hypersetup{pdftitle={The compress-with-another threshold of Szykula's Figure 3 family},pdfauthor={Qichao Wang},pdfsubject={Exact pair distances in synchronizing automata},pdfkeywords={synchronizing automata, compress-with-another threshold, pair automaton}}
\newtheorem{theorem}{Theorem}
\newtheorem{lemma}[theorem]{Lemma}
\newtheorem{proposition}[theorem]{Proposition}
\newtheorem{corollary}[theorem]{Corollary}
\newcommand{\dist}{\operatorname{dist}}
\newcommand{\M}[1]{M_{#1}}
\newcommand{\Ost}[1]{O_{#1}}
\newcommand{\pair}[2]{\{#1,#2\}}
\title{The compress-with-another threshold\protect\\ of Szyku\l a's Figure 3 family}
\author{Qichao Wang\\Hebei University of Technology}
\date{3 October 2026}
\begin{document}
\maketitle
\begin{abstract}
We give a self-contained pair-automaton proof of the exact compress-with-another threshold of the corrected Figure~3 family in Szyku\l a's survey of open problems. For every $p\geq3$, the automaton has $n=3p$ states and $\mu(q_0)=4p=4n/3$. The word $(ba)^p(ab)^p$ attains this value. Two entrance potentials and an excluded region yield the lower bound, with the endpoint exception at $p=3$ treated explicitly. A separate reset construction proves $\operatorname{rt}(\mathcal A_p)\leq3p^2+4p-1$ for every parameter. Reproducible computations verify the transition and entrance identities; the all-parameter results follow from the explicit proofs.
\end{abstract}

\section{Introduction}
A complete deterministic semiautomaton $(Q,\Sigma,\delta)$ is synchronizing if some word maps $Q$ to a singleton. Such a word is a reset word, and the reset threshold $\operatorname{rt}(\mathcal A)$ is its minimum length. We use right actions, writing $qw=\delta(q,w)$, and count word length in original letters. For a state $q$, put
\[
 \mu(q)=\min\{|w|:qw=rw\text{ for some }r\in Q\setminus\{q\}\}.
\]
The minimum is finite in a synchronizing automaton. It concerns one collision involving a specified initial state, whereas a reset word must merge all initial states.

Szyku\l a~\cite{szykula} asks whether $\mu(q)=O(n)$ in every strongly connected synchronizing $n$-state automaton. His Figure~3, attributed there to D\.{z}yga's 2018 master's thesis~\cite{dzyga}, gives a binary family on $3k+6$ states proposed to attain $4n/3$. Conjecture~8 asserts synchronization of the family and this specified-state threshold. For general background we refer to Volkov~\cite{volkov}; the \v Cern\'y conjecture~\cite{cerny}, asking for reset words of length at most $(n-1)^2$, remains open. Our result concerns only the completed interpretation of Figure~3 defined in this paper.

We use $p=k+2\geq3$ and state the corrected transitions in full. The initial $b$-edge and the terminal matching edges complete the published description. The word printed in Conjecture~8 of the published version, $(ab)^p a(ab)^p$, has $4p+1$ letters; the corrected witness $(ba)^p(ab)^p$ has $4p$ letters and merges $q_0$ with $q_2$. Our reference~\cite{szykula} is the EPTCS publication and public arXiv v1; all claims below concern the completed transition definition stated here.

The proof strategy is as follows. Both letters have rank $n-1$, with unique kernel pairs $K_a=\pair{q_4}{q_{2p+1}}$ and $K_b=\pair{q_3}{q_{2p}}$. A first collision involving $q_0$ therefore consists of a path from an initial pair $\pair{q_0}{q_i}$ to one of these pairs, followed by its collapsing letter. We compute the two exact distances. An invariant excludes large opposite-parity main-branch pairs. An entrance potential on a main-branch layer pays for entry into a reflected mixed-branch layer; a second potential delays arrival at $K_b$. A further region containing $K_a$ can only be entered from $K_b$. This is a symbolic lower bound for every word and every initial partner.

Independent computational and proof-oriented work on the same family appeared in September 2026~\cite{machina,modal12,modal30}. Szyku\l a has also informed the author of other independently communicated proofs. We therefore make no priority claim; the purpose of the present note is to give a concise, self-contained pair-automaton proof together with an explicit synchronization construction.

\section{The corrected Figure~3 family}
Let $\mathcal A_p$ have $Q=\{q_0,\ldots,q_{3p-1}\}$ and alphabet $\{a,b\}$. The complete definition is conveniently given by matchings and three exceptional edges. Letter $a$ fixes $q_0$, swaps $q_{2j-1}$ and $q_{2j}$ for $1\leq j\leq p-1$, and has
\[
 q_{2p-1}a=q_{2p},\qquad q_{2p}a=q_{2p+1},\qquad q_{2p+1}a=q_3.
\]
On $q_{2p+2},\ldots,q_{3p-1}$ it swaps successive pairs starting at the even index $2p+2$; an unmatched last state is fixed. Letter $b$ swaps $q_{2j}$ and $q_{2j+1}$ for $0\leq j\leq p-1$, has $q_{2p}b=q_2$, and swaps successive pairs on $q_{2p+1},\ldots,q_{3p-1}$ starting at $2p+1$, again fixing an unmatched last state. These rules also specify $q_0b=q_1$.

For the proof write
\[
 \M i=q_i\ (0\leq i\leq2p),\qquad
 \Ost h=q_{4p-h}\ (p+1\leq h\leq2p-1).
\]
The outer coordinate runs in the reverse direction. Away from the exceptions, on either branch letter $a$ increases an odd coordinate by one and decreases an even coordinate by one; $b$ does the opposite. At the outer bottom $h=p+1$, the decreasing letter fixes the state. It is $a$ when $p+1$ is even and $b$ when $p+1$ is odd. The remaining exceptional rules are
\begin{equation}\label{eq:exceptions}
 \M0a=\M0,\quad \M{2p}a=\Ost{2p-1},\quad
 \M{2p}b=\M2,\quad \Ost{2p-1}a=\M3.
\end{equation}

\begin{figure}[ht]
\centering
\begin{tikzpicture}[>=Stealth,scale=.91,every node/.style={font=\small},
 state/.style={draw,circle,inner sep=2pt}]
\node[state] (m0) at (0,0) {$M_0$};
\node[state] (m1) at (1.25,0) {$M_1$};
\node[state] (m2) at (2.5,0) {$M_2$};
\node[state] (m3) at (3.75,0) {$M_3$};
\node[state] (m4) at (5,0) {$M_4$};
\node (dots) at (6.15,0) {$\cdots$};
\node[state] (mpm) at (7.45,0) {$M_{2p-1}$};
\node[state] (mp) at (9.15,0) {$M_{2p}$};
\draw[<->] (m0)--node[above]{$b$}(m1);
\draw[<->] (m1)--node[above]{$a$}(m2);
\draw[<->] (m2)--node[above]{$b$}(m3);
\draw[<->] (m3)--node[above]{$a$}(m4);
\draw (m4)--(dots);\draw (dots)--(mpm);
\draw[->] (mpm)--node[above]{$a$}(mp);
\node[state] (ot) at (9.15,-1.9) {$O_{2p-1}$};
\node[state] (on) at (7.05,-1.9) {$O_{2p-2}$};
\node (od) at (5.55,-1.9) {$\cdots$};
\node[state] (ob) at (3.85,-1.9) {$O_{p+1}$};
\draw[->] (mp)--node[right]{$a$}(ot);
\draw[<->] (ot)--node[below]{$b$}(on);
\draw (on)--(od);\draw (od)--(ob);
\draw[->] (ot) to[bend left=12] node[above,sloped]{$a$}(m3);
\draw[->] (mp) to[bend right=25] node[above]{$b$}(m2);
\draw[->] (m0) to[loop left] node[left]{$a$}(m0);
\end{tikzpicture}
\caption{Schematic redrawn from the corrected transition description of Szyku\l a's Figure~3 family~\cite{szykula}, for $p\geq4$. Unlabelled chain segments continue the alternating matchings. The outer bottom is fixed by its decreasing letter; for $p=3$ the outer branch consists only of $O_5,O_4$. The complete rules are given in the text.}
\end{figure}

Both letters have rank $3p-1$. Indeed, their only double fibers are
\begin{equation}\label{eq:kernels}
 a^{-1}(\M3)=\pair{\M4}{\Ost{2p-1}}=K_a,
 \qquad b^{-1}(\M2)=\pair{\M3}{\M{2p}}=K_b.
\end{equation}
Each letter misses one image state: $a$ misses $\M{2p-1}$ and $b$ misses $\M{2p}$. All other fibers are singletons.
The underlying directed graph is strongly connected. The main matchings connect $\M0,\ldots,\M{2p-1}$ in both directions, the final main $a$-edge reaches $\M{2p}$, and $a$ then enters the outer branch. Its matchings connect every outer state in both directions, and its top returns to $\M3$; $\M{2p}b=\M2$ also returns to the main chain.

\section{Exact pair distances and the threshold}
The unordered pair graph has vertex set $\binom Q2$. It has a directed edge $P\xrightarrow{x}Px$ for $x\in\{a,b\}$ whenever $|Px|=2$; collapsing edges are omitted. Set
\[
 S_p=\{\pair{\M0}{q}:q\in Q\setminus\{\M0\}\}.
\]
Distance from $S_p$ means the minimum over its vertices. A first collision must use a kernel pair, so
\begin{equation}\label{eq:reduction}
 \mu(q_0)=1+\min\{\dist(S_p,K_a),\dist(S_p,K_b)\}.
\end{equation}

\begin{lemma}\label{lem:dist}
For every $p\geq3$,
\[
 \dist(S_p,K_b)=4p-1,\qquad \dist(S_p,K_a)=4p.
\]
\end{lemma}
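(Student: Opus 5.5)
The proof has two halves: explicit words for the upper bounds, and potential arguments for the lower bounds.

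\emph{Upper bounds.} For $K_b$, start from $\pair{\M0}{\M2}$ and apply $(ba)^p(ab)^{p}$ without its final letter $b$. This is a word of length $4p-1$. I will trace the pair letter by letter and check three things: no intermediate pair is a kernel pair (so no collapsing edge is used), the pair after $(ba)^p$ has been carried around through $\M{2p}\to\Ost{2p-1}\to\M3$, and the pair reached is exactly $\pair{\M3}{\M{2p}}=K_b$. The trace is routine once we use the rule that on either branch an odd coordinate moves up under $a$ and down under $b$, and an even coordinate does the opposite, together with \eqref{eq:exceptions}. For $K_a$ I will not look for an independent shortest path. I only need $\dist(S_p,K_a)\le 4p$, and $K_b\xrightarrow{a}\pair{\M4}{\Ost{2p-1}}=K_a$ gives it directly. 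I will check that $\M3a=\M4$ and $\M{2p}a=\Ost{2p-1}$, so this edge is non-collapsing.

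\emph{Lower bounds.} The pairs split into main--main pairs $\pair{\M i}{\M j}$, mixed pairs $\pair{\M i}{\Ost h}$, and outer--outer pairs. Away from the exceptional edges each letter acts as an involution on coordinates, so the parity of $i+j$ is preserved, except at $\M0a=\M0$ and at the outer bottom fixed point. First I will prove an invariant: starting from $S_p$ and staying among main--main pairs, the pairs reachable in fewer than some number of steps avoid $K_b$, which has $i+j=2p+3$, odd. The natural quantity is $|i-j|$ together with parity. A parity change requires a fixed point, i.e.\ $\M0a$, and this only helps pairs with small coordinates. So large opposite-parity main--main pairs are excluded, and $K_b$ cannot be reached by staying on the main branch.

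Second, every path must enter a mixed layer through $\M{2p}a=\Ost{2p-1}$. I will define an entrance potential $\Phi$ on main--main pairs, roughly $2p-i$ plus the distance of the partner from the relevant endpoint. I will show that reaching a pair $\pair{\M{2p}}{\M j}$ with the right parity costs at least $\Phi$ steps from $S_p$; the intended bound is about $2p$. Third, on mixed pairs I will define a second potential measuring how far $\Ost h$ is from the top $h=2p-1$ and how far the main coordinate is from $3$. Returning via $\Ost{2p-1}a=\M3$ while the other coordinate sits at $\M{2p}$ then costs about $2p$ more steps, and the reflection of the outer branch forces both coordinates to travel. Adding the two potentials gives $4p-1$ for $K_b$.

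For $K_a=\pair{\M4}{\Ost{2p-1}}$, I will show that its in-neighbours all lie in a region that can only be entered through $K_b$. The only preimage of $\M4$ under $b$ is $\M5$, and $\Ost{2p-1}$ has preimages $\Ost{2p-2}$ under $b$ and $\M{2p}$ under $a$. So the region has to be small and explicit, and any path to $K_a$ has length at least $\dist(S_p,K_b)+1$. The main obstacle is finding potentials that decrease by at most $1$ per edge uniformly in $p$, especially near the outer bottom, whose fixing letter depends on the parity of $p$, and at $p=3$. At $p=3$ the outer branch is only $\Ost5,\Ost4$, so the fixed-point bookkeeping collapses and the excluded region must be checked by hand.
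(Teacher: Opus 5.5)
Your upper bounds are correct and are exactly the paper's: the length-$(4p-1)$ prefix of $(ba)^p(ab)^p$ carries $\{M_0,M_2\}$ through $\{M_{2p},M_1\}$ and $\{M_{2p-1},O_{2p-1}\}$ to $K_b$, and $K_b\xrightarrow{a}K_a$.

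\textbf{Gap 1: the two-phase funnel does not cover all paths.} The lower bound for $K_b$ has a genuine gap. Your count assumes that a path first reaches a main pair $\{M_{2p},M_j\}$ at cost about $2p$, and then passes through $M_{2p}a=O_{2p-1}$ into a mixed phase costing about $2p-1$. You state that ``every path must enter a mixed layer through $M_{2p}a=O_{2p-1}$'', but this is false: $S_p$ itself contains the mixed pairs $\{M_0,O_h\}$, and your potential on main--main pairs does not control them. For example, $\{M_0,O_{2p-1}\}$ follows its constant-sum track down to the outer bottom and reflects there. It reaches $\{M_{2p-3},O_{2p-1}\}$ at time $2p-3$. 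Then $O_{2p-1}a=M_3$ puts it on the main track $\{M_x,M_{2p+1-x}\}$ at $\{M_3,M_{2p-2}\}$. In fact $(ba)^p(ab)^p$ also merges $q_0$ with $q_{2p+1}$ in $4p$ steps, so your first phase is tight along a path that starts in the mixed layer. Likewise, the reflected track ending at $\{M_{2p-1},O_{2p-1}\}$ has entrances outside your funnel:
\[
\{M_{2p},O_{2p-2}\}\xrightarrow{b}\{M_2,O_{2p-1}\},\qquad
\{O_{2p-1},O_{2p-3}\}\xrightarrow{a}\{M_3,O_{2p-2}\},
\]
with $\{O_5,O_4\}\xrightarrow{a}\{M_3,O_4\}$ replacing the second when $p=3$. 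What you are missing is a way to bound the arrival time at every entrance source of each track. The paper does this with the index bound. Each letter raises an original index by at most one, so the particle started at $q_0$ has index at most $t$, and hence $t\ge\min\{i:q_i\in P\}$ for the current pair $P$. This gives source times $2p-3$ and $2p-2$ for the two entrances into the first track. It gives $2p$ and $2p+1$ for the extra entrances into the reflected track. These are exactly what the potentials $2p-2+|x-3|$ and $2p-1+x$ require.

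\textbf{Gap 2: the excluded-region invariant.} For opposite-parity main pairs the matchings conserve $i+j$, not $|i-j|$; for instance $\{M_1,M_2\}b=\{M_0,M_3\}$. The parity type also changes at the jump $M_{2p}b=M_2$, not only at $M_0a=M_0$. That jump is how the witness path produces the sum-$(2p+1)$ pairs $\{M_2,M_{2p-1}\}$ and $\{M_1,M_{2p}\}$, so the correct threshold is $2p+1$. More importantly, a statement restricted to paths that stay among main--main pairs, or to pairs reachable within some number of steps, is not what the last step needs. You must show that the other $a$-preimage $\{M_4,M_{2p-1}\}$ of $K_b$ is never reachable, including by paths that have visited the outer branch. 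Otherwise the last edge into $K_b$ is not forced to be $\{M_{2p-1},O_{2p-1}\}\xrightarrow{a}K_b$. The repair is the paper's unconditional invariant. The only way back into main--main pairs is $O_{2p-1}a=M_3$, which produces $\{M_3,M_ja\}$. That pair is a large opposite-parity pair only if it equals $K_b$, and $K_b$ leaves only to $K_a$.

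\textbf{On $K_a$.} Your plan here is the paper's idea, but the region entered only from $K_b$ is not small. It is the track $\{M_x,O_{2p+3-x}\}$, $4\le x\le p+2$, joined at the outer bottom to $\{M_{h+2},O_h\}$, $p+1\le h\le 2p-2$. Its endpoints and exceptional fibers must be checked as in the paper.
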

We prove the lower bounds first. All paths considered below start in $S_p$ and remain in the distinct-pair graph. No bound on path length or restriction on the choice of letters is imposed.

\paragraph{The index bound and excluded region.}
Every original state index increases by at most one in one letter. The particle that started at $q_0$ consequently has index at most the elapsed time $t$. For a current pair $P$ this gives
\begin{equation}\label{eq:index}
 t\geq\min\{i:q_i\in P\}.
\end{equation}
Let $J$ consist of opposite-parity main pairs whose coordinate sum exceeds $2p+1$. The only reachable member of $J$ is $K_b$.
To see this, an ordinary matching step on an opposite-parity pair preserves its sum, and an ordinary step on a same-parity pair preserves that parity type. A pair involving the fixed $\M0$ cannot acquire a sum above $2p+1$. The $b$-jump from $\M{2p}$ lands at $\M2$; its other main image is at most $\M{2p-1}$. Hence the resulting main-pair coordinate sum is at most $2+(2p-1)=2p+1$, so it cannot enter $J$, which requires a strictly larger sum. The $a$-step from $\M{2p}$ leaves the main branch.

The terminal edge $\M{2p-1}a=\M{2p}$ also requires attention. A target in $J$ containing $\M{2p}$ has the form $\pair{\M{2p}}{\M j}$ with odd $j\geq3$. For $5\leq j\leq2p-3$, its only $a$-preimage is $\pair{\M{2p-1}}{\M{j+1}}$, already in $J\setminus\{K_b\}$ since its sum is $2p+j>2p+1$. The remaining odd endpoint $j=2p-1$ has no $a$-preimage. For $j=3$, the preimage using $\M4$ is likewise already in $J\setminus\{K_b\}$, while $\pair{\M{2p-1}}{\Ost{2p-1}}a=K_b$. An outer-to-main step lands at $\M3$, and an opposite-parity sum above $2p+1$ requires its other image to be $\M{2p}$, exactly the case just treated. Two outer states cannot both enter the main branch in one letter. Finally, $K_ba=K_a$ and $K_bb$ collapses. No initial pair lies in $J$, and no edge from outside $J\setminus\{K_b\}$ enters that set. This proves the invariant by induction. Call a pair \emph{admissible} if it is outside $J\setminus\{K_b\}$.

\Needspace{8\baselineskip}
\paragraph{The first entrance potential.}
Put
\[
 C=\{\pair{\M x}{\M{2p+1-x}}:1\leq x\leq p\},
 \qquad \phi_C(x)=2p-2+|x-3|.
\]
No initial pair lies in $C$. Each internal edge changes the smaller coordinate by at most one, so the potential increases by at most one. The complete list of admissible entrances from outside $C$ is
\begin{center}\small
\begin{tabular}{@{}cllc@{}}\toprule
Letter & Source & Target & Source-time bound\\\midrule
$a$ & $\pair{\M{2p-3}}{\Ost{2p-1}}$ & $\pair{\M3}{\M{2p-2}}$ & $2p-3$\\
$b$ & $\pair{\M{2p-2}}{\M{2p}}$ & $\pair{\M2}{\M{2p-1}}$ & $2p-2$\\\bottomrule
\end{tabular}
\end{center}
Here and below source-time bounds concern arrival before the entering letter. Both follow from~\eqref{eq:index}; after adding one they equal the target potential. For completeness of the list, an ordinary inverse matching preserves the opposite-parity sum, hence gives a source in $C$. The exceptional $a$-fiber at $\M3$ allows the first outer source displayed above. The exceptional $b$-fiber at $\M2$ allows the second source. Targets containing $\M{2p}$ have no $b$-preimage, and those containing $\M{2p-1}$ have no $a$-preimage. The fixed $\M0$ cannot occur in $C$. These exhaust the exceptional inverse fibers. Thus induction on the path, including any exits and reentries, proves
\begin{equation}\label{eq:C}
 P\in C\ \Longrightarrow\ t\geq\phi_C(P),
 \qquad \dist(S_p,\pair{\M1}{\M{2p}})\geq2p.
\end{equation}

\paragraph{The reflected entrance potential.}
Consider the disjoint tracks
\[
 A=\{\pair{\M x}{\Ost{2p+1-x}}:2\leq x\leq p\},\qquad
 E=\{\pair{\M x}{\Ost x}:p+1\leq x\leq2p-1\}.
\]
For $H=A\cup E$ set $\phi_H=2p-1+x$, where $x$ is the main coordinate. No initial pair lies in $H$. Ordinary internal edges change $x$ by one. At the outer bottom, its decreasing letter exchanges $A$ at $x=p$ with $E$ at $x=p+1$; this also changes the potential by one. All admissible entrances from outside $H$ are
\begin{center}\small
\begin{tabular}{@{}cllc@{}}\toprule
Letter & Source & Target & Source-time bound\\\midrule
$a$ & $\pair{\M1}{\M{2p}}$ & $\pair{\M2}{\Ost{2p-1}}$ & $2p$\\
$b$ & $\pair{\M{2p}}{\Ost{2p-2}}$ & $\pair{\M2}{\Ost{2p-1}}$ & $2p$\\
$a$, $p\geq4$ & $\pair{\Ost{2p-1}}{\Ost{2p-3}}$ & $\pair{\M3}{\Ost{2p-2}}$ & $2p+1$\\
$a$, $p=3$ & $\pair{\Ost5}{\Ost4}$ & $\pair{\M3}{\Ost4}$ & $7$\\\bottomrule
\end{tabular}
\end{center}
The first bound is~\eqref{eq:C}; the others follow from~\eqref{eq:index}. Adding one gives $\phi_H$ at each target.

We check completeness uniformly. In the interior of $A$, the coordinates have opposite parity, so both forward and inverse matchings preserve their sum. In the interior of $E$, the equal coordinates move together. At the outer bottom the fixed coordinate joins precisely the two tracks just described. At the outer top, $a^{-1}(\Ost{2p-1})=\{\M{2p}\}$: target main coordinate $2$ gives the first entrance, whereas the top target in $E$ has main coordinate $2p-1$ and no $a$-preimage. Under $b$, the double fiber at $\M2$ gives the second entrance; its other source stays in $H$. To enter from two outer states under $a$, the main image must be $\M3$, hence the target is the $x=3$ member of $A$. Its ordinary outer preimage is $\Ost{2p-3}$ for $p\geq4$. For $p=3$, $h=4$ is the bottom and is fixed by $a$, so that preimage is instead $\Ost4$. These are the last two rows. The missing main images and the remaining ordinary inverse matchings give no further entrances. Therefore
\begin{equation}\label{eq:H}
 P\in H\ \Longrightarrow\ t\geq2p-1+x,
 \qquad \dist(S_p,\pair{\M{2p-1}}{\Ost{2p-1}})\geq4p-2.
\end{equation}
In particular the $p=3$ replacement pays the same potential, $8$, at its target; the top of $E$ is reached no earlier than time $10$.

\paragraph{Entrance into the kernel regions.}
There is only one admissible incoming edge to $K_b$:
\begin{equation}\label{eq:Kbentry}
 \pair{\M{2p-1}}{\Ost{2p-1}}\xrightarrow{a}K_b.
\end{equation}
Indeed, $b$ misses $\M{2p}$, and the other $a$-preimage is $\pair{\M4}{\M{2p-1}}$, in $J\setminus\{K_b\}$. Equation~\eqref{eq:H} now yields $\dist(S_p,K_b)\geq4p-1$.

To control $K_a$ define
\[
 B=\{\pair{\M x}{\Ost{2p+3-x}}:4\leq x\leq p+2\},\qquad
 F=\{\pair{\M{h+2}}{\Ost h}:p+1\leq h\leq2p-2\}.
\]
These tracks are disjoint. Their union $U$ contains $K_a$ and no initial pair. Its only admissible entrance from outside $U$ is $K_b\xrightarrow{a}K_a$. In fact, ordinary inverse matchings preserve the sum $2p+3$ in $B$, or the difference $2$ in $F$. At the outer bottom the decreasing letter joins $B$ at main coordinate $p+2$ to $F$ at main coordinate $p+3$. The only $U$-target containing the outer top is the $x=4$ member of $B$, namely $K_a$, whose $a$-preimage is $K_b$. A $b$-jump lands at $\M2$, and an outer-to-main $a$-jump at $\M3$, whereas every main coordinate in $U$ is at least $4$. Targets with main coordinate $2p$ have no $b$-preimage; those with main coordinate $2p-1$ have no $a$-preimage. The other top endpoint of $F$, with coordinates $(2p,2p-2)$, has its ordinary $a$-preimage in $F$ for $p\geq4$, and in $B$ for $p=3$. Thus the list exhausts both endpoints and all exceptional fibers, also when $p=3$ and $F$ has only one member.

Every path reaching $U$ has a first entrance, paid by a preceding visit to $K_b$. Hence its first entrance has length at least $4p$, and later visits cannot occur earlier. This proves $\dist(S_p,K_a)\geq4p$. A final $b$-collapse costs at least $4p$, and a final $a$-collapse at least $4p+1$.

\paragraph{Matching paths and the upper word.}
Put $c=ba$ and $u=ab$. Direct substitution in the transition rules gives cycles, in the displayed order,
\begin{equation}\label{eq:cycles}
 R=(\M0,\M2,\M4,\ldots,\M{2p},\M1)\quad\text{under }c,
 \qquad T=(\M0,\M1,\M3,\ldots,\M{2p-1},\M2)\quad\text{under }u.
\end{equation}
Both have length $p+2$. In particular $\pair{\M0}{\M2}c^p=\pair{\M{2p}}{\M1}$ and $\M1u^p=\M2$. The $u$-chain starting at $\M{2p}$ visits the even-indexed outer states upwards, reflects at the last original state, visits the odd-indexed outer states downwards, and then reaches $\M2$. It consists of $\M{2p}$ and all $p-1$ outer states, so its entrance length is $p$. Consequently
\begin{equation}\label{eq:upper}
 \pair{q_0}{q_2}(ba)^p(ab)^p=\{q_2\}.
\end{equation}
Immediately before the last $a$ in this word, the pair is $\pair{\M{2p-1}}{\Ost{2p-1}}$. After that $a$ it is $K_b$, and the final $b$ collapses it. These pairs are distinct until the last letter: an earlier collision would persist and contradict the distinct penultimate pair. Thus the prefix of length $4p-1$ reaches $K_b$ in the pair graph. Appending $a$, instead of the collapsing $b$, reaches $K_a$ at length $4p$. This proves both upper distance bounds and completes Lemma~\ref{lem:dist}.

\begin{theorem}\label{thm:main}
For every $p\geq3$, the corrected Figure~3 automaton $\mathcal A_p$ on $n=3p$ states satisfies
\[
 \mu(q_0)=4p=\frac{4n}{3}.
\]
The word $(ba)^p(ab)^p$ is a shortest word compressing $q_0$ with another state; it merges $q_0$ and $q_2$ at $q_2$.
\end{theorem}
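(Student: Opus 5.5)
The theorem follows from the reduction \eqref{eq:reduction} combined with Lemma~\ref{lem:dist}, so my plan is to justify the reduction carefully and then substitute the distances.

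First I check \eqref{eq:reduction}. Let $w$ be a shortest word with $q_0w=rw$ for some $r\neq q_0$. Write $w=w'x$ with $x$ a letter. By minimality of $|w|$, the pair $\pair{q_0}{r}w'$ still has two distinct states, and the last letter $x$ merges them. Since each letter has rank $3p-1$ with a unique double fiber, as recorded in \eqref{eq:kernels}, we get $\pair{q_0}{r}w'=K_x$. For the same reason, every prefix of $w'$ keeps the two images distinct, so $w'$ traces a path in the distinct-pair graph from a vertex of $S_p$ to $K_a$ or to $K_b$. Hence $|w|\geq 1+\min\{\dist(S_p,K_a),\dist(S_p,K_b)\}$. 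Conversely, any such path followed by the corresponding collapsing letter gives a compressing word, which proves the reverse inequality.

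Lemma~\ref{lem:dist} gives $\min\{4p,4p-1\}=4p-1$, so $\mu(q_0)=4p$. Since $n=3p$, this equals $4n/3$. The collapse can only occur through $K_b$: a collapse through $K_a$ costs at least $4p+1$, so every shortest word ends with a $b$-collapse at $K_b$.

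For the witness, \eqref{eq:upper} shows that $(ba)^p(ab)^p$ has length $4p$ and maps both $q_0$ and $q_2$ to $q_2$. Its length matches the lower bound, so it is a shortest compressing word.

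The step I would be most careful with is justifying that all proper prefixes keep the pair distinct. Minimality of $w$ handles this: if some proper prefix $w''$ already merged $q_0$ and $r$, then $w''$ itself would be a shorter compressing word, contradicting the choice of $w$. Beyond that, the real difficulty sits in Lemma~\ref{lem:dist}, which I take as given.
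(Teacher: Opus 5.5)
Your proposal is correct and follows the paper's proof: the paper also just combines the reduction \eqref{eq:reduction} with Lemma~\ref{lem:dist} and the witness identity \eqref{eq:upper}. Your justification of \eqref{eq:reduction}, using minimality of $w$ and the unique double fibers in \eqref{eq:kernels}, correctly expands the paper's one-line remark that a first collision must use a kernel pair.
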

\begin{proof}
Combine Lemma~\ref{lem:dist} with~\eqref{eq:reduction} and~\eqref{eq:upper}.
\end{proof}
Thus the numerical threshold assertion in the corrected form of Conjecture~8 holds for the completed Figure~3 family used in this paper.

\section{An explicit synchronization construction}
We give a reset construction separately, since a specified-state collision alone does not prove synchronization. Continue with $c=ba$, $u=ab$, and label the $c$-cycle $R$ in~\eqref{eq:cycles} by positions $0,\ldots,p+1$: position $0$ is $\M0$, position $i$ is $\M{2i}$ for $1\leq i\leq p$, and position $p+1$ is $\M1$. In these coordinates $c$ adds one modulo $r=p+2$.

The remaining main odd states descend by two under $c$ until reaching $\M1$. The outer $c$-chain visits odd original indices upwards, reflects at the terminal state, visits even original indices downwards, and then reaches $\M3$ and $\M1$. Its largest distance into $R$ is $p$. Since $c$ permutes $R$, this proves the exact image identity
\begin{equation}\label{eq:core}
 Qc^p=R.
\end{equation}
For $j\geq p$ the actual return word $d_j=ac^j$ acts on these positions by
\begin{equation}\label{eq:return}
 id_j=j-i\pmod r\quad(0\leq i\leq p-1),\qquad
 pd_j=(p+1)d_j=j+1\pmod r.
\end{equation}
For $1\leq i\leq p-1$, $\M{2i}a=\M{2i-1}$ reaches $\M1$ after $i-1$ applications of $c$, giving the first formula. Position $0$ is fixed by $a$. From position $p$, $a$ reaches the outer top, which enters $R$ at $\M1$ after $p$ applications of $c$. Position $p+1$ goes under $a$ to $\M2$, giving the same residue. This verifies every position in~\eqref{eq:return}.

Prepare the image with
\[
 F_p=c^p u^{p-1}a^4c^{p-2},\qquad |F_p|=6p-2.
\]
The image of $R$ under $u^{p-1}$ is
\[
 \{\M0,\M1,\M3,\ldots,\M{2p-1},\Ost{2p-1}\}.
\]
This follows from the $u$-cycle and the $u$-chain used above. Under $a^4$, the two states $\M{2p-1}$ and $\Ost{2p-1}$ both reach $\M4$, while all remaining displayed states are fixed. The resulting image is $\{\M0,\M1,\M3,\ldots,\M{2p-3},\M4\}$. Applying $c^{p-2}$ gives, in $R$-positions,
\begin{equation}\label{eq:prepare}
 QF_p=S_{p-2},\qquad S_t=\{0,\ldots,t\}\cup\{p,p+1\},\quad S_{-1}=\{p,p+1\}.
\end{equation}
Indeed the odd states give positions $p-4,p-5,\ldots,0,p+1$ when $p\geq4$, while $\M0,\M1,\M4$ give $p-2,p-3,p$. For $p=3$, the prepared set before the final $c$ is $\{\M0,\M1,\M3,\M4\}$; its image has positions $\{0,1,3,4\}=S_1$. Thus the same identity covers the smallest parameter.

For $0\leq t\leq p-2$, equation~\eqref{eq:return} gives
\begin{equation}\label{eq:shrink}
 S_td_{p+t}=S_{t-1}.
\end{equation}
For $2\leq t\leq p-2$, the interval $\{0,\ldots,t\}$ maps to $\{(p+t-i)\bmod r:0\leq i\leq t\}=\{p,p+1,0,\ldots,t-2\}$, where $r=p+2$. The pair $p,p+1$ merges to residue $(p+t+1)\bmod r=t-1$. Adding this image gives $\{p,p+1\}\cup\{0,\ldots,t-1\}=S_{t-1}$. At $t=1$, the interval maps to $\{p,p+1\}$ and the merged image is $0$, giving $S_0$. At $t=0$, the interval maps to $\{p\}$ and the merged image is $-1\bmod r=p+1$, giving $S_{-1}=\{p,p+1\}$. This checks the terminal steps as well as the general one.

\begin{proposition}\label{prop:sync}
For every $p\geq3$, an explicit reset word for $\mathcal A_p$ is
\begin{equation}\label{eq:reset}
 R_p=c^p u^{p-1}a^4c^{p-2}
       (ac^{2p-2})(ac^{2p-3})\cdots(ac^p)u^p.
\end{equation}
It maps $Q$ to $\{q_2\}$ and has length $3p^2+4p-1$. In particular, $\mathcal A_p$ is synchronizing and
\[
 \operatorname{rt}(\mathcal A_p)\leq3p^2+4p-1.
\]
\end{proposition}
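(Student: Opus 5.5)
The plan is to follow the image of $Q$ through the four blocks of $R_p$, using the identities already established before the statement. The preparation block $F_p=c^pu^{p-1}a^4c^{p-2}$ sends $Q$ to $S_{p-2}$ by~\eqref{eq:prepare}; this relies on~\eqref{eq:core} for the first $c^p$, and it covers $p=3$ as remarked there. The middle block is the product $(ac^{2p-2})(ac^{2p-3})\cdots(ac^p)=d_{2p-2}d_{2p-3}\cdots d_p$. Here the factor $d_{p+t}$ is applied while the current image is $S_t$, for $t=p-2,p-3,\ldots,0$ in that order. Iterating~\eqref{eq:shrink} then gives $S_{p-2}\mapsto S_{p-3}\mapsto\cdots\mapsto S_0\mapsto S_{-1}$. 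Each application is legitimate because $p+t\geq p$, the range in which~\eqref{eq:return} was verified. So after the first two blocks the image is $S_{-1}=\{p,p+1\}$ in $R$-positions, which is the pair $\{\M{2p},\M1\}$.

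The last block $u^p$ must merge this pair, and I would get this directly from the upper-bound computation. The word $c^p=(ba)^p$ sends $\pair{\M0}{\M2}$ to $\pair{\M{2p}}{\M1}$ by~\eqref{eq:cycles}. By~\eqref{eq:upper}, $(ab)^p=u^p$ then collapses $\pair{\M{2p}}{\M1}$ to $\{q_2\}$. Explicitly, $\M1u^p=\M2$ by the $u$-cycle $T$, and $\M{2p}u^p=\M2$ by the $u$-chain of entrance length $p$. Hence $QR_p=\{q_2\}$, so $\mathcal A_p$ is synchronizing and $\operatorname{rt}(\mathcal A_p)\leq|R_p|$.

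For the length, $|F_p|=6p-2$ and $|u^p|=2p$. The middle block has length
\[
\sum_{j=p}^{2p-2}(1+2j)=(p-1)+2\sum_{j=p}^{2p-2}j=(p-1)+(p-1)(3p-2)=(p-1)(3p-1)=3p^2-4p+1.
\]
Adding the three blocks gives $(6p-2)+(3p^2-4p+1)+2p=3p^2+4p-1$, as claimed.

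The main obstacle is making sure the indices line up, because everything else is already in place. The factor $d_j$ applied at stage $t$ must be exactly $j=p+t$, and the descending order of exponents in~\eqref{eq:reset} must match the descending $t$. In addition, the terminal cases $t=1,0$ of~\eqref{eq:shrink} must land on $S_{-1}$. I would also recheck the position-to-state dictionary: position $p$ is $\M{2p}$ and position $p+1$ is $\M1$, so the pair that $u^p$ collapses is exactly the one appearing in~\eqref{eq:upper}. For $p=3$ the middle block is the single factor $ac^4=d_4$, applied to $S_1$. This case is covered by the $t=1$ verification and needs only a quick separate check.
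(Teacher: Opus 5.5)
Your proposal is correct and is essentially the paper's proof: \eqref{eq:prepare} gives $S_{p-2}$, the factors $d_{p+t}$ for $t=p-2,\ldots,0$ apply \eqref{eq:shrink} down to $S_{-1}=\{M_{2p},M_1\}$, $u^p$ collapses this pair to $M_2$, and the letter count is the same. One slip is in your closing remark: for $p=3$ the middle block is not a single factor but the $p-1=2$ factors $(ac^4)(ac^3)=d_4d_3$, which realize $S_1\to S_0\to S_{-1}$, exactly as your general iteration and your length sum already assume (stopping after $d_4$ would leave $S_0=\{M_0,M_6,M_1\}$, which $u^3$ sends to $\{M_5,M_2\}$, not a singleton).
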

\begin{proof}
Equation~\eqref{eq:prepare} initializes the image. Applying~\eqref{eq:shrink} in descending order $t=p-2,\ldots,0$ leaves positions $p,p+1$, that is $\pair{\M{2p}}{\M1}$. The final $u^p$ maps both to $\M2$. Counting every $a,b$ letter gives
\[
 |R_p|=(6p-2)+\sum_{j=p}^{2p-2}(2j+1)+2p
       =(6p-2)+(3p-1)(p-1)+2p=3p^2+4p-1.
\]
\end{proof}
Together with Theorem~\ref{thm:main}, this proves both intended assertions of the corrected form of Conjecture~8 for the completed Figure~3 family used in this paper.

\begin{corollary}
Suppose there are $c\in\mathbb R$, $n_0\in\mathbb N_{\geq1}$ and a single function $\varepsilon:\mathbb N_{\geq1}\to\mathbb R$ with $\varepsilon(n)/n\to0$ such that, for every $n\geq n_0$, every binary strongly connected synchronizing $n$-state automaton whose letters have corank one, and every state $q$, one has $\mu(q)\leq cn+\varepsilon(n)$. Then $c\geq4/3$.
\end{corollary}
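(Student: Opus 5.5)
The plan is to argue by contradiction, using the family $\mathcal A_p$ as witnesses. I would take the state $q=q_0$ and use Theorem~\ref{thm:main} together with Proposition~\ref{prop:sync}.

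First I would check that each $\mathcal A_p$ meets every hypothesis of the corollary. It is binary by definition. It is strongly connected, as observed after~\eqref{eq:kernels}. It is synchronizing by Proposition~\ref{prop:sync}. Both letters have rank $3p-1=n-1$, so each has corank one: by~\eqref{eq:kernels} each has exactly one double fiber and misses exactly one image state. Consequently, for every $n=3p$ with $p\geq3$, the automaton $\mathcal A_p$ lies in the class over which the hypothesized bound quantifies.

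Next I would suppose, for contradiction, that constants $c<4/3$, $n_0$ and a function $\varepsilon$ with $\varepsilon(n)/n\to0$ satisfy the hypothesis. Take $n=3p$ with $p\geq3$ and $3p\geq n_0$, and apply the bound to $\mathcal A_p$ and $q_0$. Theorem~\ref{thm:main} gives $\mu(q_0)=4n/3$, so $4n/3\leq cn+\varepsilon(n)$. Dividing by $n$ yields
\[
 \frac43-c\leq\frac{\varepsilon(n)}{n}.
\]
This holds for the infinitely many values $n=3p\geq\max(9,n_0)$. Letting $p\to\infty$ along this subsequence, the right side tends to $0$, while the left side is a fixed positive number. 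This is a contradiction, so $c\geq4/3$.

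There is no real obstacle; the only point needing care is that the class is restricted to corank-one letters. That is exactly why the verification that both letters of $\mathcal A_p$ have rank $n-1$ is needed. It is also worth noting that it suffices for $\varepsilon(n)/n\to0$ along the subsequence $n\equiv0\pmod 3$, which the hypothesis guarantees.
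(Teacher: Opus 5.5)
Your proof is correct and follows the paper's argument. Both apply the assumed bound to $\mathcal A_p$ at $q_0$ with $\mu(q_0)=4p$ from Theorem~\ref{thm:main}, divide by $n=3p$, and let $p\to\infty$; your contradiction framing is equivalent to the paper's direct limit, and your explicit check of class membership (binary, strongly connected, synchronizing by Proposition~\ref{prop:sync}, corank-one letters) spells out what the paper states in one sentence.
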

\begin{proof}
The automata $\mathcal A_p$ belong to this class. For all sufficiently large $p$, the assumed uniform bound gives $4p\leq3pc+\varepsilon(3p)$, hence $4/3\leq c+\varepsilon(3p)/(3p)$. Letting $p\to\infty$ proves the claim.
\end{proof}

\section{Verification}
The independent checker reconstructs the letter matchings and verifies the excluded region, entrance lists, potential inequalities, upper witness, and reset-image identities for the $67$ parameters $p=3,\ldots,64$ and $100,128,256,512,1024$, up to $n=3072$. Forward-star and reverse-pair distances are also checked for $p=3,\ldots,12$, and all words of length below $12$ are checked at $p=3$. Supplementary checks compute both kernel distances and apply expanded reset words for $p=3,\ldots,24$. Reproduction commands, detailed operation counts and validation reports are supplied with the source. These computations verify the symbolic argument; the all-parameter proofs are those above.

The abstract layered-pair barrier mechanism underlying the entrance induction has been formalized in Lean~4.19.0. Its graph and entrance conditions are hypotheses: the concrete Figure~3 transitions, excluded region and all-parameter entrance classification have not been instantiated in Lean. The formalization therefore certifies the conditional mechanism, rather than the concrete theorem of this note.

\Needspace{8\baselineskip}
\paragraph{Acknowledgements and computational and AI assistance.}
The author thanks Marek Szyku\l a for correspondence. OpenAI language-model tools, including Codex, assisted exploratory proof search, computational checking, critique, and language editing. The mathematical claims are supported by the explicit proofs in this paper and reproducible checks; responsibility for the statements rests with the author.


\end{document}